\documentclass[12pt,american,a4paper, 12pt]{amsart}
\usepackage[T1]{fontenc}
\usepackage[latin9]{inputenc}
\setlength{\parskip}{\smallskipamount}
\setlength{\parindent}{0pt}
\usepackage{verbatim}
\usepackage{amstext}
\usepackage{amsthm}
\usepackage{amssymb}
\usepackage{graphicx}

\makeatletter
\numberwithin{equation}{section}
\numberwithin{figure}{section}
\theoremstyle{plain}
\newtheorem{thm}{\protect\theoremname}
\theoremstyle{definition}
\newtheorem{defn}[thm]{\protect\definitionname}
\theoremstyle{plain}
\newtheorem{prop}[thm]{\protect\propositionname}
\theoremstyle{plain}
\newtheorem{cor}[thm]{\protect\corollaryname}
\theoremstyle{definition}
\newtheorem{example}[thm]{\protect\examplename}
\theoremstyle{remark}
\newtheorem{rem}[thm]{\protect\remarkname}

\@ifundefined{date}{}{\date{}}
\makeatother

\usepackage{babel}
\providecommand{\corollaryname}{Corollary}
\providecommand{\definitionname}{Definition}
\providecommand{\examplename}{Example}
\providecommand{\propositionname}{Proposition}
\providecommand{\remarkname}{Remark}
\providecommand{\theoremname}{Theorem}

\begin{document}

\title{Riemann-Hilbert factorization of matrices invariant under inversion
in a circle}

\author{Hideshi Yamane}

\curraddr{{\small{}Department of Mathematical Sciences, Kwansei Gakuin University
}\\
{\small{}Gakuen 2-1 Sanda, Hyogo 669-1337, Japan}}

\email{{\small{}yamane@kwansei.ac.jp}}
\begin{abstract}
We consider matrix functions with certain invariance under inversion
in the unit circle. If such a function satisfies a positivity assumption
on the unit circle, then only zero partial indices appear in its Riemann-Hilbert
(Wiener-Hopf) factorization. It implies the unique solvability of
a certain class of Riemann-Hilbert boundary value problems. It includes
the ones associated with the inverse scattering transform of the focusing/defocusing
integrable discrete nonlinear Schr\"odinger equations. \\
AMS subject classfication: 35Q15, 47A68 
\end{abstract}

\keywords{Riemann-Hilbert problems, Wiener-Hopf factorization}

\maketitle
\markboth{Hideshi Yamane}{Riemann-Hilbert factorization}

\section{Introduction}

Riemann-Hilbert problems (RHPs), formulated in various ways, are a
powerful tool in the study of integrable systems. As is proved in
\cite{ZhouSIAM89}, if a matrix function is invariant under Schwarz
reflection, its Riemann-Hilbert (Wiener-Hopf) factorization involves
only zero partial indices and it implies the unique solvability of
the corresponding RHPs formulated in other ways (a singular integral
equation and a boundary value problem). The zero partial indices property
is a key in the argument in \cite{CPAM89}. The main result there
is the bijectivity of the scattering and inverse scattering maps.
So bijectivity is known for NLS in sufficient detail (see also \cite{DZ2003,CPAM98}),
but the integrable discrete nonlinear Schr\"odinger equation (IDNLS)
still lacks a satisfactory theory. In order to construct such a theory,
we need a detailed information about relevant RHPs. In the discrete
case, the real axis must be replaced by the unit circle (\cite{APT,IDNLSdefocusing,IDNLSdefocusing2,IDNLS}).
If a matrix function invariant under the inversion in $S^{1}$, namely
$z\to1/\bar{z},$ and satisfies a positivity condition on $S^{1}$,
then it has only zero partial indices. It implies the unique solvability
of a certain class of Riemann-Hilbert boundary value problems including
the ones associated with IDNLS. This fact can be a basis of the bijectivity
proof of the scattering/inverse scattering transforms for IDNLS. See
also the approach in \cite{FokasXia} based on a vanishing lemma. 

Factorization of matrices given on $S^{1}$ is a topic that can be
studied from other directions. See, e.g., \cite{clanceygohberg,Lit}.
It is known that a positive Hermitian matrix function $v$ on $S^{1}$
has an expression $v=w^{*}w$, where $w$ is holomorphic inside $S^{1}$.
We give a generalization of this fact to the case of an inversion
invariant contour including $S^{1}$. 

\section{Function spaces}

Let $\Sigma\subset\mathbb{C}$ be a finite disjoint union of smooth
simple closed curves. More specifically, we assume $\Sigma=\cup_{j=1}^{J}\Sigma_{j}$,
where each $\Sigma_{j}$ is a smooth simple closed curve and $\Sigma_{j}\cap\Sigma_{k}=\emptyset\,(j\ne k)$.
It is possible to assign an orientation on $\Sigma$ such that it
is the positively oriented boundary of an open set $\Omega_{+}$.
Set $\Omega_{-}=\mathbb{C}\setminus(\Sigma\cup\Omega_{+})$. Then
$\Sigma$ is the negatively oriented boundary of the open set $\Omega_{-}.$
We introduce function spaces following \cite{ZhouSIAM89,CPAM89}.
The $L^{2}$ norm of a matrix function $f\colon\Sigma\to\mathbf{M}_{n}$($\mathbf{M}_{n}$
is the complex $n\times n$ matrix algebra) is defined by $\|f\|_{2}=\left(\int_{\Sigma}|f|^{2}|dz|\right)^{1/2},|f|=(\mathrm{tr}f^{*}f)^{1/2}$,
where the asterisk means the Hermitian conjugate. We write $L^{2}(\Sigma)$
for $L^{2}(\Sigma,\mathbf{M}_{n})$. We denote by $H^{k}(\Sigma)\,(k\ge1)$
the space of all the matrix functions $f$ such that $f^{(j)}\in L^{2}(\Sigma)$
for all $j=0,\dots,k$ in the distribution sense. Its norm is $\|f\|_{2,k}=\left(\sum_{j=0}^{k}\|f^{(j)}\|_{2}^{2}\right)^{1/2}$
and $H^{k}(\Sigma)$ is a Hilbert space with continuous pointwise
multiplication. Sometimes we write $\|f\|_{2,k}$ as $\|f\|_{k}$
for brevity. A function $f\in H^{k}(\Sigma)$ is H\"older continuous. 

In the present paper, we choose a formulation in which the contour
is bounded. In \cite{ZhouSIAM89}%
{} and \cite{CPAM89}%
, however, the author assumes that the contour is unbounded. At some
places of the present paper, we reduce the proof to the unbounded
case. %
{} Let $C_{\pm}$ be the Cauchy operators defined by 
\[
C_{\pm}f(z)=\lim_{z'\to z}\frac{1}{2\pi}\int_{\Sigma}\frac{f(w)\,dw}{w-z'},
\]
where the nontangential limit $z'\to z$ is taken from $\Omega_{\pm}$
respectively. They are bounded from $L^{2}(\Sigma)$ to $L^{2}(\Sigma)$
and from $H^{k}(\Sigma)$ to $H^{k}(\Sigma)$. It is known that $C_{+}$
and $-C_{-}$ are complementary projections. Moreover, a function
in $\mathrm{Ker}\,C_{\pm}=\mathrm{Range}\,C_{\mp}$ has a holomorphic
extention to $\Omega_{\mp}.$

\section{Formulation of Riemann-Hilbert problems}

Assume that $v\in H^{k}(\Sigma)$ admits a factorization $v=(b^{-})^{-1}b^{+}$
for invertible $b^{\pm}\in H^{k}(\Sigma)$. Since  $\Sigma$ is bounded,
$|\det b^{-}|$ is uniformly away from 0 and $(b^{-})^{-1}\in H^{k}(\Sigma)$.
A factorization as above always exists (we have only to choose $b^{+}=I$
or $b^{-}=I$). Set $w^{\pm}=\pm(b^{\pm}-I)$, i.e. $b^{\pm}=I\pm w^{\pm}$.
We call $w=(w^{+},w^{-})$ \textsl{a pair of factorization data} of
$v$. The set of all such pairs is denoted by $FD_{k}$. We have 
\[
FD_{k}=\left\{ \left(w^{+},w^{-}\right)\in\oplus^{2}H^{k}(\Sigma);\,I\pm w^{\pm}\text{ is invertible}\right\} .
\]

\begin{defn}
An element $\mu\in H^{j}(\Sigma)(j=0,\dots,k)$ is said to be a solution
of the Riemann-Hilbert problem (RHP) of the pair of factorization
data $w$ if 
\begin{equation}
\mu b^{\pm}-h\in\mathrm{Range}\,C_{\pm}\label{eq:RHPdef}
\end{equation}
for some constant matrix $h$. 
\end{defn}

The definition in \cite{ZhouSIAM89} has been modified here because
$\mu(\infty)$ is not defined in the present paper. Notice that $m_{\pm}:=\mu b^{\pm}\in H^{j}(\Sigma)$.
Since  they are in the ranges of the Cauchy operators modulo $h$,
they have a holomophic extension to $\mathbf{C}\setminus\Sigma$,
which we denote by $m$. We call it the solution of the Riemann-Hilbert
problem of $v$ or $w$. 
\begin{prop}
\label{prop: RHPclassical}If $\mu$ is a solution of (\ref{eq:RHPdef})
for fixed $h$, then the holomophic extension $m$ is a solution of
a Riemann-Hilbert boundary value problem in the classical sense: 
\[
m_{+}=m_{-}v,m(\infty)=\lim_{z\to\infty}m(z)=h.
\]
Conversely, if a holomorphic function $m$ satisfies $m(\infty)=h$
and $m_{+}=m_{-}v$, then $\mu=m_{+}(b^{+})^{-1}=m_{-}(b^{-})^{-1}$
is a solution of (\ref{eq:RHPdef}). 
\end{prop}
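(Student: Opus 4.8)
The plan is to unwind the definitions on both sides and translate between the boundary-value formulation and the factorization-data formulation using the basic properties of the Cauchy operators recorded at the end of Section 2. The key facts I would lean on are: a function lies in $\mathrm{Range}\,C_{\pm}$ precisely when it extends holomorphically to $\Omega_{\mp}$ (and, in the bounded setting, the range of $C_{+}$ consists of functions extending holomorphically to $\Omega_{-}$ with the appropriate decay, while $\mathrm{Range}\,C_{-}$ consists of functions holomorphic in $\Omega_{+}$); and that $C_{+}$ and $-C_{-}$ are complementary projections. The point $z\to\infty$ lies in $\Omega_{-}$, so holomorphicity there together with the normalization $m(\infty)=h$ is exactly what the range conditions encode modulo the constant $h$.

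For the forward direction, suppose $\mu$ solves \eqref{eq:RHPdef}, so $m_{\pm}:=\mu b^{\pm}$ satisfy $m_{\pm}-h\in\mathrm{Range}\,C_{\pm}$. First I would note, as the excerpt already does, that each $m_{\pm}\in H^{j}(\Sigma)$ and hence admits a holomorphic extension; the membership $m_{\pm}-h\in\mathrm{Range}\,C_{\pm}$ means the extension $m-h$ is holomorphic in $\Omega_{\mp}$, so that $m$ itself is holomorphic in $\mathbf{C}\setminus\Sigma$ with a single-valued continuation across each component of $\Sigma$, and $m(\infty)=h$ follows because $m-h$ extends holomorphically across $\infty\in\Omega_{-}$ and $C_{+}f$ vanishes at infinity. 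Then the jump relation is immediate from the definition of $w=(w^{+},w^{-})$: on $\Sigma$ we have
\[
m_{+}(b^{+})^{-1}=\mu=m_{-}(b^{-})^{-1},
\]
and since $v=(b^{-})^{-1}b^{+}$, multiplying by $b^{+}$ on the right gives $m_{+}=m_{-}(b^{-})^{-1}b^{+}=m_{-}v$, which is the desired boundary relation.

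For the converse, suppose $m$ is holomorphic on $\mathbf{C}\setminus\Sigma$ with $m(\infty)=h$ and $m_{+}=m_{-}v$ on $\Sigma$. I would set $\mu:=m_{+}(b^{+})^{-1}$ and observe that the jump relation together with $v=(b^{-})^{-1}b^{+}$ forces $m_{+}(b^{+})^{-1}=m_{-}(b^{-})^{-1}$, so $\mu$ is well defined and equals $m_{-}(b^{-})^{-1}$ as well; regularity $\mu\in H^{j}(\Sigma)$ follows since $b^{\pm}$ and their inverses lie in $H^{k}(\Sigma)\subset H^{j}(\Sigma)$ with continuous pointwise multiplication. Then $\mu b^{\pm}=m_{\pm}$, and since $m$ is holomorphic in $\mathbf{C}\setminus\Sigma$ with $m(\infty)=h$, the difference $m-h$ is holomorphic on $\Omega_{\pm}$ and vanishes at $\infty$; invoking the characterization $\mathrm{Ker}\,C_{\pm}=\mathrm{Range}\,C_{\mp}$ of the Cauchy projections yields $m_{\pm}-h\in\mathrm{Range}\,C_{\pm}$, which is exactly \eqref{eq:RHPdef}.

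The only delicate point — and the step I would treat most carefully rather than as routine — is the handling of behavior at $\infty$ in the \emph{bounded}-contour formulation adopted here, since $\mu(\infty)$ is deliberately left undefined and the Cauchy operators are defined by a bounded-contour integral. Concretely, I must justify that ``$f\in\mathrm{Range}\,C_{+}$'' corresponds to holomorphic extension to $\Omega_{-}$ that vanishes at $\infty$, so that the constant $h$ absorbs exactly the nonzero value $m(\infty)$; this is where the normalization $m(\infty)=h$ enters and where the modification of the definition relative to \cite{ZhouSIAM89} must be seen to be consistent. Everything else is a direct algebraic manipulation of the identity $v=(b^{-})^{-1}b^{+}$ and the definitions.
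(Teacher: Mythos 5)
Your proof is correct and takes essentially the same route as the paper's two-line argument: the identity $m_{+}(b^{+})^{-1}=\mu=m_{-}(b^{-})^{-1}$ yields $m_{+}=m_{-}v$, and $m(\infty)=h$ follows because $m-h$ is a Cauchy integral over the bounded contour $\Sigma$ and hence vanishes at infinity. The one slip to repair is your stated characterization that $\mathrm{Range}\,C_{\pm}$ consists of functions extending holomorphically to $\Omega_{\mp}$: the paper's convention is the opposite ($\mathrm{Range}\,C_{\pm}$ extends to $\Omega_{\pm}$, while it is $\mathrm{Ker}\,C_{\pm}=\mathrm{Range}\,C_{\mp}$ that extends to $\Omega_{\mp}$, since $C_{\pm}f$ is the boundary value of the Cauchy integral taken from $\Omega_{\pm}$), and this correct version is in fact what you implicitly use in your converse step, so only the forward-direction wording needs fixing.
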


\begin{proof}
We have $m_{+}(b^{+})^{-1}=\mu=m_{-}(b^{-})^{-1}$ and $m_{+}=m_{-}(b^{-})^{-1}b^{+}=m_{-}v$.
Next, $m(\infty)=h$ follows from $m=h+\text{(a Cauchy integral)}$.
The converse is now easy. 
\end{proof}
For $w=(w^{+},w^{-})$, set 
\[
C_{w}\phi=C_{+}(\phi w^{-})+C_{-}(\phi w^{+}).
\]
Then $C_{w}$ is a bounded operator from $H^{j}(\Sigma)$ to itself
for every $j=0,1,\dots,k$. 
\begin{prop}
\label{prop:RHP and resolvent}An element $\mu$ of $L^{2}(\Sigma)$
is a solution of (\ref{eq:RHPdef}) if and only if 

\begin{equation}
(I-C_{w})\mu=h\label{eq: singintegraleq}
\end{equation}
 holds. If $\mathrm{Id}-C_{w}$ is a bijection, then a solution of
(\ref{eq:RHPdef}) exists uniquely. 
\end{prop}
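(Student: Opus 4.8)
The plan is to convert the two range-membership conditions hidden in (\ref{eq:RHPdef}) into a single operator equation by exploiting that $C_+$ and $-C_-$ are complementary projections. Writing $b^\pm = I \pm w^\pm$, condition (\ref{eq:RHPdef}) asserts both $\mu b^+ - h \in \mathrm{Range}\,C_+$ and $\mu b^- - h \in \mathrm{Range}\,C_-$. Since $\mathrm{Range}\,C_+ = \mathrm{Ker}\,C_-$ and $\mathrm{Range}\,C_- = \mathrm{Ker}\,C_+$, these are equivalent to $C_-(\mu b^+ - h) = 0$ and $C_+(\mu b^- - h) = 0$ respectively. Before combining them I would record the action of the Cauchy operators on the constant $h$: because $\Sigma$ is the positively oriented boundary of the bounded region $\Omega_+$ and $h$ extends holomorphically into $\Omega_+$, the Cauchy integral of $h$ equals $h$ on $\Omega_+$ and vanishes on $\Omega_-$, so $C_+ h = h$ and $C_- h = 0$. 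This is the step that replaces the value-at-infinity normalization used in the unbounded formulation of \cite{ZhouSIAM89,CPAM89}, and it is where the bounded-contour setting must be handled with care.

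For the forward direction I would substitute $b^\pm = I \pm w^\pm$ together with the values of $C_\pm h$ into the two equations above, obtaining $C_-\mu + C_-(\mu w^+) = 0$ and $C_+\mu - C_+(\mu w^-) = h$. Subtracting these and using the Plemelj relation $C_+ - C_- = \mathrm{Id}$ gives $\mu = h + C_+(\mu w^-) + C_-(\mu w^+) = h + C_w\mu$, which is exactly (\ref{eq: singintegraleq}).

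For the converse I would start from $(\mathrm{Id} - C_w)\mu = h$, that is $\mu = h + C_+(\mu w^-) + C_-(\mu w^+)$, and apply the two projections separately. Applying $C_-$ and using $C_- C_+ = 0$, $C_-^2 = -C_-$, and $C_- h = 0$ recovers $C_-(\mu b^+) = 0$; applying $C_+$ and using $C_+^2 = C_+$, $C_+ C_- = 0$, and $C_+ h = h$ recovers $C_+(\mu b^-) = h = C_+ h$, i.e.\ $C_+(\mu b^- - h) = 0$. Thus both range-membership conditions hold and $\mu$ solves (\ref{eq:RHPdef}). I expect this separation to be the only genuinely non-formal point: the single equation (\ref{eq: singintegraleq}) a priori yields only the difference of the two conditions, and it is precisely the complementary-projection structure that splits it back into the two individual statements.

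Finally, the uniqueness clause follows immediately from the equivalence just proved. The constant $h$ lies in $L^2(\Sigma)$ since $\Sigma$ is bounded, and $C_w$ is bounded on $L^2(\Sigma)$, so if $\mathrm{Id} - C_w$ is a bijection then (\ref{eq: singintegraleq}) has the unique solution $\mu = (\mathrm{Id} - C_w)^{-1}h$. By the first part this $\mu$ is the unique element of $L^2(\Sigma)$ solving (\ref{eq:RHPdef}) with the prescribed normalization $h$.
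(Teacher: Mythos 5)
Your argument is essentially the paper's: both directions are pure algebra with the Plemelj relation $C_{+}-C_{-}=\mathrm{Id}$ and the complementary-projection identities, and the uniqueness clause is the same appeal to the invertibility of $\mathrm{Id}-C_{w}$. The paper handles the direction from (\ref{eq: singintegraleq}) to (\ref{eq:RHPdef}) slightly more directly, exhibiting $\mu b^{\pm}-h=C_{\pm}(\mu w^{+}+\mu w^{-})$ as an explicit element of $\mathrm{Range}\,C_{\pm}$ rather than applying the projections to the equation, but that is a cosmetic difference.

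The one step you should not take for granted is the evaluation $C_{+}h=h$, $C_{-}h=0$. It rests on your assertion that $\Omega_{+}$ is the bounded side of $\Sigma$, which is not part of the hypotheses: the paper only assumes that $\Sigma$ is the positively oriented boundary of $\Omega_{+}$, and in its main applications (Sections 5--6) $S^{1}$ carries the clockwise orientation, so $\Omega_{+}$ contains the unbounded exterior and one gets $C_{+}h=0$, $C_{-}h=-h$ instead. With those values your intermediate displays $C_{-}\mu+C_{-}(\mu w^{+})=0$ and $C_{+}\mu-C_{+}(\mu w^{-})=h$ become $-h$ and $0$ respectively. Nothing essential is lost: the conditions you are after are $C_{-}(\mu b^{+}-h)=0$ and $C_{+}(\mu b^{-}-h)=0$, and in every configuration the only identity actually needed is $(C_{+}-C_{-})h=h$. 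The clean fix, which is what the paper's computation does, is to keep $h$ inside the projections throughout, e.g.\ to write $(I-C_{w})\mu=C_{+}(\mu b^{-}-h)-C_{-}(\mu b^{+}-h)+h$, so that no individual value of $C_{\pm}h$ is ever invoked. With that adjustment your proof is correct.
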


\begin{proof}
We follow the proof of \cite[Prop 3.3]{ZhouSIAM89}. Recall that $C_{+}-C_{-}=\mathrm{Id}.$
If $\mu$ satisfies (\ref{eq: singintegraleq}), we have
\begin{align*}
\mu b^{+}-h & =\mu(I+w^{+})-(I-C_{w})\mu=\mu w^{+}+C_{w}\mu\\
 & =(C_{+}-C_{-})(\mu w^{+})+C_{+}(\mu w^{-})+C_{-}(\mu w^{+})\\
 & =C_{+}(\mu w^{+}+\mu w^{-})\in\mathrm{Range}\,C_{+,}
\end{align*}
and similarly $\mu b^{-}-h=C_{-}(\mu w^{+}+\mu w^{-})\in\mathrm{Range}\,C_{-}$.
\\
Conversely, assume (\ref{eq:RHPdef}). Then $\mu b^{\pm}-h\in\mathrm{Ker}\,C_{\mp}.$
We have
\begin{align*}
(I-C_{w})\mu & =(C_{+}-C_{-})\mu-\left[C_{+}(\mu w^{-})+C_{-}(\mu w^{+})\right]\\
 & =C_{+}(\mu b^{-})-C_{-}(\mu b^{+})\\
 & =C_{+}(\mu b^{-}-h)-C_{-}(\mu b^{+}-h)+h=h.
\end{align*}
\end{proof}

\section{Factorization and partial indices}

We introduce two classes of holomorphic matrix functions following
\cite{CPAM89}: 
\begin{eqnarray*}
\mathcal{H}^{k}(\mathbf{C}\setminus\Sigma) & := & \left\{ m;\,m_{\pm}-m(\infty)\in\mathrm{Range}\,C_{\pm}\right\} ,\\
G\mathcal{H}^{k}(\mathbf{C}\setminus\Sigma) & := & \left\{ m\in\mathcal{H}^{k}(\mathbf{C}\setminus\Sigma);\,\det m\,\mathrm{vanishes\,nowhere}\right\} ,
\end{eqnarray*}
where $C_{\pm}\colon H^{k}(\Sigma)\to H^{k}(\Sigma)$. 
\begin{thm}
\label{thm:factorization}Any $v\in H^{k}(\Sigma)$ with $\det v\ne0$
admits a Riemann-Hilbert (Wiener-Hopf) factorization $v=m_{-}^{-1}\theta m_{+}$
relative to $\Sigma$ in $H^{k}(\Sigma)$. Here $m_{\pm}$ are the
boundary values of an element $m$ of $G\mathcal{H}^{k}(\mathbf{C}\setminus\Sigma)$.
The matrix $\theta$ is 
\begin{equation}
\theta=\mathrm{diag}\left[\left(\frac{z-z_{+}}{z-z_{-}}\right)^{k_{1}},\dots,\left(\frac{z-z_{+}}{z-z_{-}}\right)^{k_{n}}\right],\label{eq:theta}
\end{equation}
where $z_{\pm}\in\Omega^{\pm}$ and $k_{1},\dots,k_{n}$ are integers
such that $k_{1}\ge\dots\ge k_{n}$. We call $k_{1},\dots,k_{n}$
the partial indices of $v$. They are uniquely determined. 
\end{thm}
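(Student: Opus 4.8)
The plan is to reduce the statement to the factorization theorem for an \emph{unbounded} contour established in \cite{ZhouSIAM89,CPAM89}, and then to transport the result back to the bounded contour $\Sigma$. Choose a point $z_0\in\Sigma$ lying on one of the components $\Sigma_j$ and consider a M\"obius transformation $\zeta=\psi(z)$ with $\psi(z_0)=\infty$, e.g.\ $\psi(z)=1/(z-z_0)$. It carries $\Sigma$ onto an unbounded contour $\tilde\Sigma=\psi(\Sigma)$ (the component through $z_0$ becoming a curve asymptotic to a line, as with the prototype real axis) and maps $\Omega_\pm$ biholomorphically onto $\psi(\Omega_\pm)$. First I would check that $\psi$ conjugates the Cauchy operators $C_\pm$ and preserves the Sobolev structure, so that $v\circ\psi^{-1}\in H^k(\tilde\Sigma)$ has nonvanishing determinant whenever $v$ does; this is where the behavior near $z_0$, where regularity and the normalization at infinity interact, must be controlled.

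Granting the transport of function spaces, the cited theorems yield a factorization $v\circ\psi^{-1}=\tilde m_-^{-1}\tilde\theta\,\tilde m_+$ on $\tilde\Sigma$, where $\tilde\theta=\mathrm{diag}\bigl[\lambda^{k_1},\dots,\lambda^{k_n}\bigr]$ with a scalar multiplier $\lambda$ carrying a single zero in $\psi(\Omega_+)$ and a single pole in $\psi(\Omega_-)$, and $\tilde m\in G\mathcal{H}^k$. Pulling back by $\psi$, the zero and pole of $\lambda$ are relocated to prescribed points $z_+\in\Omega_+$ and $z_-\in\Omega_-$, so that $\lambda\circ\psi$ equals $(z-z_+)/(z-z_-)$ up to an absorbable nonzero constant; the constant is swallowed into the holomorphic factors. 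Setting $m=\tilde m\circ\psi$ and $m_\pm$ its boundary values, one obtains $v=m_-^{-1}\theta m_+$ with $\theta$ of the form \eqref{eq:theta}. Since $\psi$ is biholomorphic on each $\Omega_\pm$, the determinant of $m$ is again nowhere zero and the membership condition $m_\pm-m(\infty)\in\mathrm{Range}\,C_\pm$ defining $\mathcal{H}^k$ is preserved, so $m\in G\mathcal{H}^k(\mathbf{C}\setminus\Sigma)$.

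For uniqueness of the partial indices, suppose $v=m_-^{-1}\theta m_+=\hat m_-^{-1}\hat\theta\,\hat m_+$ are two such factorizations, with $\theta,\hat\theta$ built from exponents $k_1\ge\cdots\ge k_n$ and $\hat k_1\ge\cdots\ge\hat k_n$. Put $Y_-=\hat m_-m_-^{-1}$, invertible and holomorphic in $\Omega_-$, and $Y_+=\hat m_+m_+^{-1}$, invertible and holomorphic in $\Omega_+$. Equating the two factorizations on $\Sigma$ gives $Y_+=\hat\theta^{-1}Y_-\theta$, whose $(i,j)$ entry is $\bigl((z-z_+)/(z-z_-)\bigr)^{k_j-\hat k_i}(Y_-)_{ij}$. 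Thus the right-hand side extends $Y_+$ analytically across $\Sigma$ to a matrix function that is rational, holomorphic at $z_+$ and at $\infty$, with poles confined to $z_-$ of bounded order; the same holds for the inverse with the roles of $\theta$ and $\hat\theta$ interchanged. Comparing the admissible pole orders against the exponents $k_j-\hat k_i$ and applying a Liouville-type degree argument forces the partial-sum inequalities $\sum_{i\le p}k_i=\sum_{i\le p}\hat k_i$ for every $p$, whence the multisets $\{k_j\}$ and $\{\hat k_i\}$ coincide; as both are arranged nonincreasingly, they agree entrywise.

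I expect the main obstacle to be the bookkeeping in the reduction step rather than the uniqueness argument. Sending a point of $\Sigma$ to infinity distorts the $H^k$ norm and the Cauchy-operator identities near $z_0$ and replaces the normalization value $m(\infty)$ by a finite datum, so one must verify carefully that Sobolev regularity, the complementarity of $C_+$ and $-C_-$, and the nonvanishing of $\det m$ all survive transport in \emph{both} directions. A secondary technical point is justifying in the uniqueness step that the glued matrix is genuinely rational of controlled degree, rather than merely meromorphic, which is precisely where the explicit form of $\theta$ in \eqref{eq:theta} and the invertibility of $m,\hat m$ on the closures of $\Omega_\pm$ are used.
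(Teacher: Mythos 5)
Your strategy is genuinely different from the paper's, and the difference matters: the crucial reduction step in your version is exactly the one you leave unverified. You send a point $z_{0}\in\Sigma$ to infinity by a M\"obius map $\psi(z)=1/(z-z_{0})$ and invoke the unbounded-contour theorems of Zhou for $v\circ\psi^{-1}$. But this forces you to transport the entire analytic framework through the singular point: the Cauchy kernel transforms as $\frac{d\zeta'}{\zeta'-\zeta}=\frac{dz'}{z'-z}-\frac{dz'}{z'-z_{0}}$, so $C_{\pm}$ are conjugated only up to a correction term involving a singular integral at the fixed point $z_{0}\in\Sigma$; the transported jump matrix tends to $v(z_{0})\ne I$ at infinity, so one must verify $v\circ\psi^{-1}-v(z_{0})\in H^{k}$ with respect to arc length on the unbounded image curve, and near infinity this amounts to weighted estimates of the form $\int|v(z)-v(z_{0})|^{2}|z-z_{0}|^{-2}|dz|<\infty$ (borderline for $H^{1}$, salvageable by a Hardy inequality, but increasingly delicate for the higher derivatives needed for general $k$). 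You explicitly flag this as ``the main obstacle'' and do not resolve it, so as written the proof has a gap precisely at its load-bearing step.

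The paper avoids all of this by a cheaper device: instead of deforming $\Sigma$, it \emph{adjoins} to $\Sigma$ a line $\Sigma'=\{\mathrm{Re}\,z=-p\}$ far from $\Sigma$ and from $z_{\pm}$, extends the jump matrix by $\hat{v}|_{\Sigma'}=I$, and notes that $\hat{v}$ lies in $H_{I}^{k}(\widehat{\Sigma})=H^{k}(\Sigma)\oplus\mathbf{M}_{n}$, so Zhou's unbounded-contour theorem applies verbatim. The original data and its Sobolev regularity are untouched, no singular point is created, and the only cleanup is to remove the artificial jump across $\Sigma'$: since $\hat{v}=I$ there, $\hat{m}_{+}=\theta^{-1}\hat{m}_{-}$ on $\Sigma'$, so redefining $m=\theta^{-1}\hat{m}$ on the far side of $\Sigma'$ yields a function holomorphic off $\Sigma$ with $v=m_{-}^{-1}\theta m_{+}$. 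If you want to keep your M\"obius route you must actually carry out the transport of $H^{k}$, of the projections $C_{\pm}$, and of the normalization at infinity in both directions; otherwise I would recommend the augmentation trick. Your uniqueness argument (gluing $Y_{+}=\hat{\theta}^{-1}Y_{-}\theta$ into a rational function with pole only at $z_{-}$ and running a Liouville/degree count) is the standard Gohberg--Krein argument and is fine in outline; the paper does not spell this out, inheriting uniqueness from the cited results.
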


\begin{proof}
Fix $z_{\pm}\in\Omega^{\pm}$. We embed our contour $\Sigma$ into
$\widehat{\Sigma}\ni\infty$ and reduce the proof to \cite[Th 9.1]{ZhouSIAM89}
or \cite[Th 2.1.3]{CPAM89}. Let $\Sigma'$ be a line (a circle in
the Riemann sphere) defined by $\mathrm{Re\,}z=-p$, where $p$ is
so large that $\Sigma'$ is far away from $\Sigma$ and $z_{\pm}$.
First we assume that $\Omega_{+}$ is bounded and that $\Omega_{-}$
is unbounded. The orientation of $\Sigma'$ is from $-p-i\infty$
to $-p+i\infty$. If $p$ is sufficiently large, we have $z_{+}\in\Omega_{+}\subset\widehat{\Omega}{}_{+},z_{-}\in\widehat{\Omega}_{-}\subset\Omega_{-}$.
Set $\widehat{\Sigma}=\Sigma\cup\Sigma'$. It has a compatible orientation
in the sense that it is a positively oriented boundary of an open
set $\widehat{\Omega}{}_{+}$ and is a negatively oriented boundary
of an open set $\widehat{\Omega}_{-}$. Extend $v\in H^{k}(\Sigma)$
to $\widehat{\Sigma}$ by setting $\hat{v}|_{\Sigma}=v,\hat{v}|_{\Sigma'}=I$.
 Then $\hat{v}$ is not an element of $H^{k}(\widehat{\Sigma})$,
but it belongs to $H_{I}^{k}(\widehat{\Sigma})=H^{k}(\Sigma)\oplus\mathbf{M}_{n}$
introduced in \cite{ZhouSIAM89} and \cite{CPAM89}. It consists of
matrix functions $f$ on $\widehat{\Sigma}$ with the limit $f(\infty)$
such that $f-f(\infty)\in H^{k}(\widehat{\Sigma})$. The norm is the
square root of $|f(\infty)|^{2}+\|f-f(\infty)\|_{k}^{2}$. Since  there
is no self-intersection, it is not necessary to introduce $H^{k}(\widehat{\Sigma}{}^{\pm})$
and $H^{k}(\Sigma^{\pm})$. 

By \cite[Th 9.1]{ZhouSIAM89} or \cite[Th 2.1.3]{CPAM89}, $\hat{v}\in H_{I}^{k}(\widehat{\Sigma})$
admits a Riemann-Hilbert factorization 
\begin{align*}
\hat{v} & =\hat{m}_{-}^{-1}\theta\hat{m}_{+},\\
\theta & =\mathrm{diag}\left[\left(\frac{z-z_{+}}{z-z_{-}}\right)^{k_{1}},\dots,\left(\frac{z-z_{+}}{z-z_{-}}\right)^{k_{n}}\right].
\end{align*}

\includegraphics[scale=0.5]{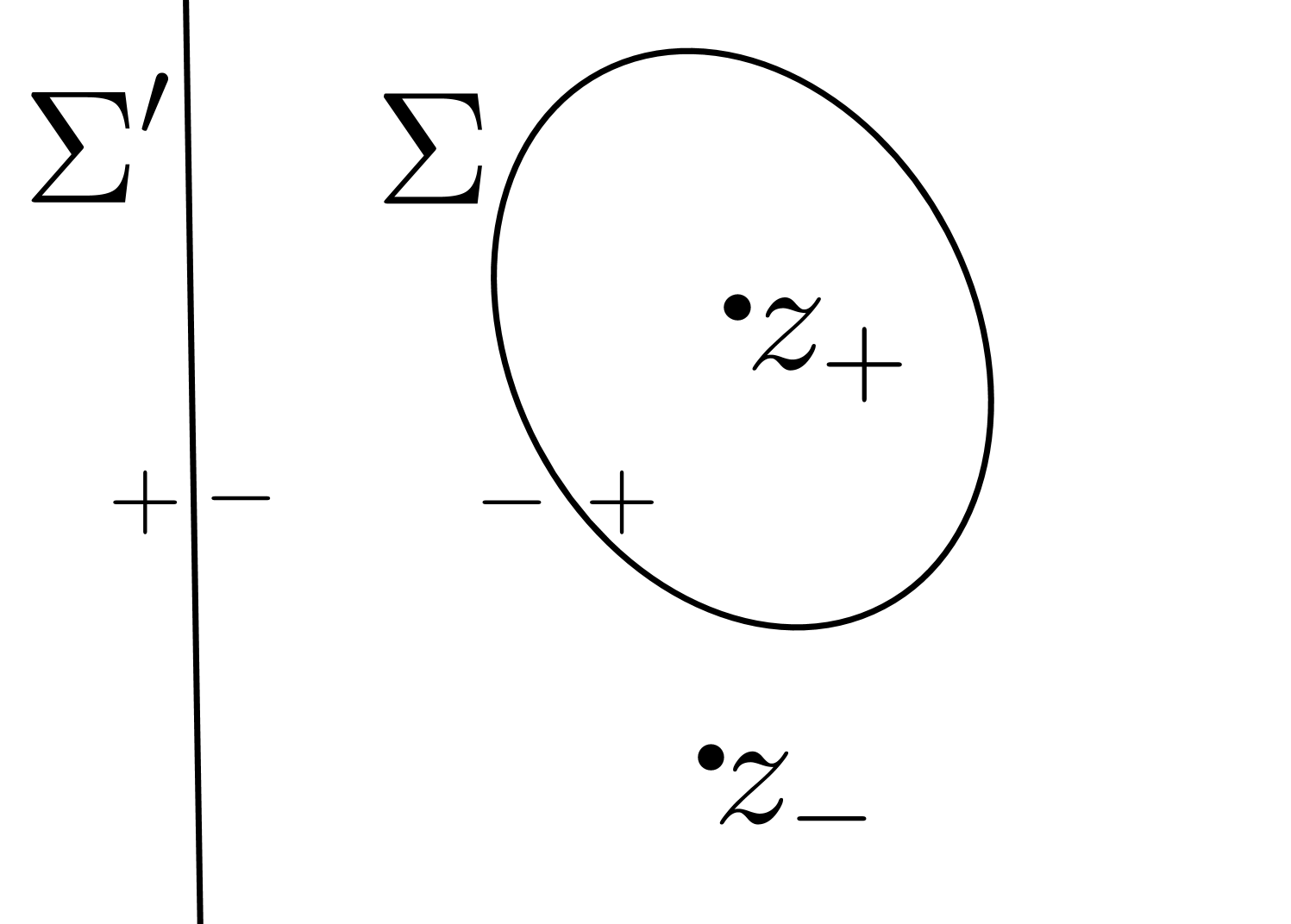}

On $\Sigma'$, we have $\hat{v}=I=\hat{m}_{-}^{-1}\theta\hat{m}_{+}$,
which implies $\hat{m}_{+}=\theta^{-1}\hat{m}_{-}$. Set $m=\theta^{-1}\hat{m}$
in $\mathrm{Re\,}z<-p$ (the positive side of $\Sigma'$) and $m=\hat{m}$
elsewhere. Then $m$ is holomorphic for $z\not\in\Sigma$ and we have
a factorization $v=m_{-}^{-1}\theta m_{+}$ on $\Sigma$. In particular,
$v$ and $\hat{v}$ has the same partial indices. \\
Next, if $\Omega_{+}$ is unbounded and $\Omega_{-}$ is bounded,
we reverse the orientation of $\Sigma'$. We get $\hat{m}_{-}=\theta\hat{m}_{+}$
on $\Sigma'$ and set $m=\theta\hat{m}$ in $\mathrm{Re\,}z<-p$ . 
\end{proof}
\begin{thm}
\label{thm: KerCoker}The operator $\mathrm{Id}-C_{w}\colon H^{k}(\Sigma)\to H^{k}(\Sigma)$
is Fredholm. Let $k_{1},\dots,k_{n}$ be the partial indices of $v$.
Then 
\begin{align*}
\mathrm{\dim\mathrm{Ker}(\mathrm{Id}}-C_{w}) & =n\sum_{k_{j}>0}k_{j},\\
\mathrm{\dim\mathrm{Coker}(\mathrm{Id}}-C_{w}) & =-n\sum_{k_{j}<0}k_{j}.
\end{align*}
\end{thm}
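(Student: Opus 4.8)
The plan is to read the kernel off the homogeneous Riemann--Hilbert problem and its reduction to the diagonal factor $\theta$, to obtain Fredholmness and the Fredholm index from Theorem~\ref{thm:factorization}, and then to deduce the cokernel dimension by subtraction. Taking $h=0$ in Proposition~\ref{prop:RHP and resolvent} and applying Proposition~\ref{prop: RHPclassical}, I would first identify $\mathrm{Ker}(\mathrm{Id}-C_w)$ with
\[
N_v=\left\{M\in\mathcal{H}^k(\mathbf{C}\setminus\Sigma);\ M_+=M_-v,\ M(\infty)=0\right\}
\]
through $\mu\mapsto M$, where $M$ is the holomorphic extension of $M_\pm=\mu b^\pm$; this is a linear isomorphism, since $\mu b^\pm\in\mathrm{Range}\,C_\pm$ forces $M(\infty)=0$ and $M_+=M_-(b^-)^{-1}b^+=M_-v$, while conversely $M\in N_v$ returns $\mu=M_\pm(b^\pm)^{-1}\in H^k(\Sigma)$. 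Writing $v=m_-^{-1}\theta m_+$ with $m\in G\mathcal{H}^k(\mathbf{C}\setminus\Sigma)$ and setting $\Phi=Mm^{-1}$ (legitimate because $m$ and $m^{-1}$ are holomorphic off $\Sigma$ with $H^k$ boundary values and $m(\infty)$ is invertible) converts the jump into $\Phi_+=\Phi_-\theta$ and preserves $\Phi(\infty)=0$, so that $N_v\cong N_\theta$ via the invertible map $\Phi\mapsto\Phi m$.

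Since $\theta$ is diagonal, $\Phi_+=\Phi_-\theta$ decouples into columns: the $j$-th column is an $n$-vector each of whose entries solves the scalar problem $\phi_+=\phi_-\bigl(\tfrac{z-z_+}{z-z_-}\bigr)^{k_j}$ with $\phi(\infty)=0$. For $k_j>0$ I would rewrite this as $\phi_+(z-z_+)^{-k_j}=\phi_-(z-z_-)^{-k_j}$ and glue the two sides into a rational function having poles of order at most $k_j$ at $z_\pm$ and vanishing at $\infty$ ($2k_j$ free parameters); imposing $\phi(\infty)=0$ annihilates the $k_j$ nonnegative Laurent coefficients at $\infty$, and one checks these conditions are independent, leaving dimension $k_j$. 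For $k_j\le 0$ the same gluing forces $\phi\equiv 0$ by a Liouville/degree argument. Each column thus contributes $n\max(k_j,0)$, whence
\[
\dim\mathrm{Ker}(\mathrm{Id}-C_w)=\dim N_\theta=n\sum_{k_j>0}k_j .
\]

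For the remaining assertions I would reduce to the unbounded contour exactly as in the proof of Theorem~\ref{thm:factorization}, embedding $\Sigma$ into $\widehat\Sigma\ni\infty$ and extending $v$ by $I$, so that Fredholmness of $\mathrm{Id}-C_w$ and the value of its index are inherited from the unbounded-contour theorems of \cite{ZhouSIAM89,CPAM89}; equivalently, the factorization $v=m_-^{-1}\theta m_+$ supplies a regularizer directly. Granting Fredholmness, the index $\dim\mathrm{Ker}-\dim\mathrm{Coker}$ is a homotopy invariant depending only on $v$, and equals $n$ times the winding number of $\det v$ along $\Sigma$. Since $\det m_\pm$ are boundary values of nonvanishing holomorphic functions, this winding number coincides with that of $\det\theta=\bigl(\tfrac{z-z_+}{z-z_-}\bigr)^{\sum_j k_j}$, namely $\sum_j k_j$; hence the index is $n\sum_j k_j$, and
\[
\dim\mathrm{Coker}(\mathrm{Id}-C_w)=\dim\mathrm{Ker}(\mathrm{Id}-C_w)-n\sum_j k_j=-n\sum_{k_j<0}k_j .
\]

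The scalar bookkeeping above is routine; the genuine difficulty is the index, that is, establishing Fredholmness together with the value $n\sum_j k_j$ in the $H^k$ setting, and this is where I expect the real work to lie. My primary route is the embedding $\Sigma\hookrightarrow\widehat\Sigma$ and the unbounded-contour results. As a self-contained alternative I would compute $\dim\mathrm{Coker}(\mathrm{Id}-C_w)$ as the kernel dimension of the transposed operator: with respect to the pairing $\langle\phi,\psi\rangle=\int_\Sigma\mathrm{tr}(\phi\psi)\,dz$ one has $C_\pm^{t}=-C_\mp$, so that $(\mathrm{Id}-C_w)^{t}$ is equivalent to the operator attached to a matrix with partial indices $-k_j$, whose kernel has dimension $n\sum_{k_j<0}(-k_j)$ by the computation above. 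Making this transpose identification precise is the delicate point.
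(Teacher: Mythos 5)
Your proposal is correct, and it takes a genuinely different route for half of the statement. The paper's proof is entirely the embedding device already set up in the proof of Theorem \ref{thm:factorization}: extend $v$ by $I$ and $w$ by $(0,0)$ to $\widehat{\Sigma}=\Sigma\cup\Sigma'$, note that with respect to $H_{I}^{k}(\widehat{\Sigma})=H^{k}(\Sigma)\oplus H_{I}^{k}(\Sigma')$ the operator $\mathrm{Id}-C_{\hat{w}}$ has the same kernel and cokernel dimensions as $\mathrm{Id}-C_{w}$, and then quote the unbounded-contour theorems of \cite{ZhouSIAM89} and \cite{CPAM89} for both formulas simultaneously; no direct computation is performed. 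You instead compute $\dim\mathrm{Ker}(\mathrm{Id}-C_{w})$ by hand: the identification with the homogeneous problem $M_{+}=M_{-}v$, $M(\infty)=0$ via Propositions \ref{prop: RHPclassical} and \ref{prop:RHP and resolvent}, the conjugation to the diagonal jump $\theta$ by the factorization of Theorem \ref{thm:factorization}, and the Liouville-type count of rational functions (dimension $k_{j}$ per entry of the $j$-th column when $k_{j}>0$, zero otherwise) are all correct. This buys a self-contained, checkable derivation of the kernel formula; but for Fredholmness, the index, and hence the cokernel you fall back on exactly the same embedding-plus-citation that constitutes the paper's whole proof (your winding-number formula for the index and the transpose identity $C_{\pm}^{t}=-C_{\mp}$ are standard, but they are precisely the content of the quoted theorems, so they do not make the argument independent of them). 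Two minor points to nail down if you write this up: (i) the step $\Phi=Mm^{-1}$ uses that products of boundary values of functions holomorphic in $\Omega_{\pm}$ with $H^{k}$ data and the right behaviour at $\infty$ remain in $\mathrm{Range}\,C_{\pm}$, so that the map $N_{v}\to N_{\theta}$ really lands in $\mathcal{H}^{k}(\mathbf{C}\setminus\Sigma)$; (ii) the gluing argument for the scalar problems remains valid for a multi-component $\Sigma$ because the jump relation holds on every component, so the glued function is rational with poles only at $z_{\pm}$.
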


\begin{proof}
We employ the embedding argument in the proof of Theorem \ref{thm:factorization}.
We extend $v\in H^{k}(\Sigma)$ and $w$ to $\widehat{\Sigma}$ by
setting $\hat{v}|_{\Sigma}=v,\hat{v}|_{\Sigma'}=I$ and $\hat{w}|_{\Sigma}=w,\hat{w}|_{\Sigma'}=(0,0)$.
Then $\hat{w}$ is a pair of factorization data of $\hat{v}$. Recall
that $v$ and $\hat{v}$ has the same partial indices. 

On $H_{I}^{k}(\widehat{\Sigma})=H^{k}(\Sigma)\oplus H_{I}^{k}(\Sigma')$,
we have $C_{\hat{w}}=C_{w}\oplus0$. By \cite[Th 9.2]{ZhouSIAM89}
and \cite[Th 2.1.6]{CPAM89}, we have $\mathrm{\dim\mathrm{Ker}(\mathrm{Id}}-C_{w})\mathrm{=\dim\mathrm{Ker}(\mathrm{Id}}-C_{\hat{w}})=n\sum_{k_{j}>0}k_{j}$.
The assertion about the cokernel is proved in the same way.
\end{proof}
\begin{cor}
\label{cor: RHPuniquesol}If the partial indices are all zero, the
Riemann-Hilbert problem (\ref{eq:RHPdef}) has a unique solution. 
\end{cor}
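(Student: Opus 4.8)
The plan is to read the corollary as an immediate consequence of the two preceding results: Theorem \ref{thm: KerCoker} supplies the kernel and cokernel dimensions of $\mathrm{Id}-C_{w}$ in terms of the partial indices, while Proposition \ref{prop:RHP and resolvent} translates solvability of the RHP (\ref{eq:RHPdef}) into invertibility of $\mathrm{Id}-C_{w}$ via the singular integral equation (\ref{eq: singintegraleq}). The entire content of the argument is the observation that vanishing partial indices force $\mathrm{Id}-C_{w}$ to be a bijection.

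First I would substitute $k_{1}=\dots=k_{n}=0$ into the formulas of Theorem \ref{thm: KerCoker}. Then the index sets $\{j:k_{j}>0\}$ and $\{j:k_{j}<0\}$ are both empty, so both sums vanish and $\dim\mathrm{Ker}(\mathrm{Id}-C_{w})=0$ and $\dim\mathrm{Coker}(\mathrm{Id}-C_{w})=0$. Since the same theorem asserts that $\mathrm{Id}-C_{w}$ is Fredholm, a trivial kernel gives injectivity, and a trivial cokernel together with closed range (part of the Fredholm property) gives surjectivity. Hence $\mathrm{Id}-C_{w}$ is a bijection.

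I would then invoke Proposition \ref{prop:RHP and resolvent}: by its first assertion, $\mu$ solves (\ref{eq:RHPdef}) for the constant $h$ exactly when $(\mathrm{Id}-C_{w})\mu=h$; by its second assertion, once $\mathrm{Id}-C_{w}$ is a bijection the solution exists and is unique, namely $\mu=(\mathrm{Id}-C_{w})^{-1}h$. This closes the argument.

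The one point I would be careful about, and where I expect any real work to lie, is the matching of function spaces. Theorem \ref{thm: KerCoker} is stated for $\mathrm{Id}-C_{w}$ on $H^{k}(\Sigma)$, whereas Proposition \ref{prop:RHP and resolvent} is phrased for $\mu\in L^{2}(\Sigma)$. To guarantee uniqueness in the full solution class I would note that $v\in H^{k}(\Sigma)\subset H^{j}(\Sigma)$ for every $0\le j\le k$, so the factorization $v=m_{-}^{-1}\theta m_{+}$ of Theorem \ref{thm:factorization} is valid in each $H^{j}(\Sigma)$ with the same $\theta$; consequently the partial indices, and hence the kernel and cokernel dimensions of $\mathrm{Id}-C_{w}$, are independent of $j$. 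Thus $\mathrm{Id}-C_{w}$ is a bijection of $H^{j}(\Sigma)$ for every such $j$, in particular of $L^{2}(\Sigma)=H^{0}(\Sigma)$, and the unique solution $\mu=(\mathrm{Id}-C_{w})^{-1}h$ automatically inherits the best regularity $H^{k}(\Sigma)$, since $h$, being a constant matrix on the bounded contour $\Sigma$, lies in $H^{k}(\Sigma)$.
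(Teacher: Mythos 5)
Your proposal is correct and follows exactly the route the paper takes: the paper's proof is the one-line instruction to combine Proposition \ref{prop:RHP and resolvent} with Theorem \ref{thm: KerCoker}, which is precisely the argument you spell out (zero partial indices give trivial kernel and cokernel of the Fredholm operator $\mathrm{Id}-C_{w}$, hence bijectivity, hence unique solvability of the singular integral equation and of the RHP). Your extra care about the function-space matching between $L^{2}(\Sigma)$ and $H^{k}(\Sigma)$ is a reasonable addition that the paper leaves implicit.
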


\begin{proof}
Use Proposition \ref{prop:RHP and resolvent} and Theorem \ref{thm: KerCoker}. 
\end{proof}

\section{Inversion in the unit circle}

For a subset $A$ of $\mathbf{C}$ and a matrix function $f$, we
set $A^{\sharp}=\left\{ 1/\bar{z};\,z\in A\right\} $ and $f^{\sharp}(z)=f(1/\bar{z})^{*}$.
It is the inversion in the unit circle $S^{1}=\left\{ z;\,|z|=1\right\} $.
For example, if $\theta$ is as in (\ref{eq:theta}) and $z_{\pm}\ne0$,
then we have 
\begin{equation}
\theta^{\sharp}=\mathrm{diag}\left[\left(\frac{\bar{z}_{+}}{\bar{z}_{-}}\cdot\frac{z-1/\bar{z}_{+}}{z-1/\bar{z}_{-}}\right)^{k_{1}},\dots,\left(\frac{\bar{z}_{+}}{\bar{z}_{-}}\cdot\frac{z-1/\bar{z}_{+}}{z-1/\bar{z}_{-}}\right)^{k_{n}}\right].\label{eq:thetainversion}
\end{equation}

\begin{thm}
\label{thm: inversion}Let $\Sigma\supset S^{1}$ be a contour invariant
under inversion in $S^{1}$. If $v\in H^{k}(\Sigma)$ with $\det v\ne0$
satisfies 
\[
v=v^{\sharp}\;\text{on}\;\Sigma\setminus S^{1}\;\text{ and }\;\mathrm{Re}\,v=(v+v^{*})/2>0\;\text{on}\;S^{1},
\]
then it has only zero partial indices and a solution of (\ref{eq:RHPdef})
exists uniquely when $h$ is fixed.
\end{thm}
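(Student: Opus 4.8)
Here is how I would approach the proof.

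The whole statement reduces, via Corollary~\ref{cor: RHPuniquesol}, to showing that every partial index $k_j$ of $v$ vanishes. By Proposition~\ref{prop:RHP and resolvent} a solution of \eqref{eq:RHPdef} with $h=0$ is exactly an element of $\mathrm{Ker}(\mathrm{Id}-C_w)$, whose holomorphic extension $m\in\mathcal H^k(\mathbf C\setminus\Sigma)$ (Proposition~\ref{prop: RHPclassical}) satisfies $m_+=m_-v$ and $m(\infty)=0$; and by Theorem~\ref{thm: KerCoker}, $\dim\mathrm{Ker}(\mathrm{Id}-C_w)=n\sum_{k_j>0}k_j$, so $k_1,\dots,k_n\le 0$ is equivalent to the \emph{vanishing lemma}: the only such $m$ is $m\equiv 0$. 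The plan is then: (i) prove this vanishing lemma; (ii) note that $v^{\sharp}$ satisfies the hypotheses of the theorem, so the vanishing lemma applies to it and forces all partial indices of $v^{\sharp}$ to be $\le 0$; and (iii) show the partial indices of $v^{\sharp}$ are precisely $-k_1,\dots,-k_n$. Together (i)--(iii) give $k_j\le 0$ and $-k_j\le 0$, hence $k_j=0$. That $v^{\sharp}$ inherits the hypotheses is immediate: off $S^1$ one has $(v^{\sharp})^{\sharp}=v=v^{\sharp}$, while on $S^1$ one has $v^{\sharp}=v^{*}$ and therefore $\mathrm{Re}\,v^{\sharp}=\mathrm{Re}\,v>0$.

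For step (iii) I would start from a factorization $v=m_-^{-1}\theta m_+$ given by Theorem~\ref{thm:factorization} and apply $\sharp$. Since $z\mapsto 1/\bar z$ is orientation reversing, fixes $S^1$, and preserves $\Sigma$, it interchanges $\Omega_+$ and $\Omega_-$; consequently $m^{\sharp}$ is holomorphic off $\Sigma$, lies in $G\mathcal H^k(\mathbf C\setminus\Sigma)$, and its boundary values satisfy $(m^{\sharp})_\pm=(m_\mp)^{\sharp}$. Applying $\sharp$ to the factorization gives $v^{\sharp}=(m^{\sharp})_-\,\theta^{\sharp}\,[(m^{\sharp})_+]^{-1}$, which, read through the single factor $(m^{\sharp})^{-1}\in G\mathcal H^k$, is a Riemann--Hilbert factorization of $v^{\sharp}$ with middle term $\theta^{\sharp}$. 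By \eqref{eq:thetainversion}, $\theta^{\sharp}$ is a constant invertible diagonal matrix times $\mathrm{diag}\bigl[(\tfrac{z-1/\bar z_-}{z-1/\bar z_+})^{-k_j}\bigr]$, and because $1/\bar z_-\in\Omega_+$ and $1/\bar z_+\in\Omega_-$ this is a $\theta$-type matrix with exponents $-k_1,\dots,-k_n$. Since multiplication by constant invertible matrices and permutations leaves partial indices unchanged, the uniqueness part of Theorem~\ref{thm:factorization} yields that the partial indices of $v^{\sharp}$ are $\{-k_j\}$.

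The heart of the argument is the vanishing lemma. Given a homogeneous $m$, I would set $G:=m\,m^{\sharp}$, holomorphic on $\mathbf C\setminus\Sigma$. A direct computation gives $G=G^{\sharp}$, so $G_\pm(z)=G_\mp(1/\bar z)^{*}$, and $m(\infty)=0$ forces $G(0)=m(0)m(\infty)^{*}=0$ and $G(\infty)=m(\infty)m(0)^{*}=0$. As $0\in\Omega_+$ and $\infty\in\Omega_-$ are the only singularities of the weight $1/z$, the residue theorem applied separately in $\Omega_+$ and $\Omega_-$ gives
\[
\frac{1}{2\pi i}\oint_\Sigma \mathrm{tr}\,G_+(z)\,\frac{dz}{z}=\mathrm{tr}\,G(0)=0,\qquad \frac{1}{2\pi i}\oint_\Sigma \mathrm{tr}\,G_-(z)\,\frac{dz}{z}=\mathrm{tr}\,G(\infty)=0.
\]
On $S^1$ one has $1/\bar z=z$, hence $(m^{\sharp})_+=m_-^{*}$ and $(m^{\sharp})_-=m_+^{*}$ pointwise; using $m_+=m_-v$ this gives $\mathrm{tr}(G_++G_-)=2\,\mathrm{tr}[m_-\,\mathrm{Re}(v)\,m_-^{*}]\ge 0$, so the $S^1$-part of $\oint_\Sigma\mathrm{tr}(G_++G_-)\,dz/z$ equals $2i\int_{S^1}\mathrm{tr}[m_-\mathrm{Re}(v)m_-^{*}]\,|dz|$, a nonnegative multiple of $i$. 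On $\Sigma\setminus S^1$ the substitution $z\mapsto 1/\bar z$ (which preserves $\Sigma\setminus S^1$, reverses orientation, and sends $dz/z$ to $-\overline{dw/w}$) together with $G_\pm(z)=G_\mp(1/\bar z)^{*}$ shows the remaining part of the integral is \emph{real}. Since the total integral vanishes and splits into a purely imaginary piece and a real piece, both vanish; in particular $\int_{S^1}\mathrm{tr}[m_-\mathrm{Re}(v)m_-^{*}]\,|dz|=0$, and $\mathrm{Re}\,v>0$ then forces $m_-=0$ on $S^1$. A holomorphic function on $\Omega_-$ vanishing at $\infty$ with zero boundary value on $S^1$ vanishes on the component of $\Omega_-$ adjacent to $S^1$, and propagating this across $m_+=m_-v$ gives $m\equiv 0$.

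The step I expect to be most delicate is this vanishing lemma on a general inversion-invariant contour: justifying the contour-integral identities in the $H^k$ (or $L^2$) setting, treating the boundary values and the residue at infinity rigorously, and completing the unique-continuation step when $\Sigma\setminus S^1$ has several components. An alternative that sidesteps part of this is to straighten $S^1$ to $\mathbf R$ by a Cayley transform $w=i(1-z)/(1+z)$, under which $z\mapsto 1/\bar z$ becomes Schwarz reflection $w\mapsto\bar w$; the hypotheses then match the reflection-invariant positivity results of \cite{ZhouSIAM89,CPAM89} on an unbounded contour through $\mathbf R$, to which the embedding technique of Theorems~\ref{thm:factorization} and \ref{thm: KerCoker} applies. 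I would nonetheless keep the direct vanishing-lemma argument as the main line, since it relies only on the machinery already set up in this paper.
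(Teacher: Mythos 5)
Your reduction of the theorem to a vanishing lemma is sound and coincides with the paper's: the verification that $v^{\sharp}$ inherits the hypotheses, the computation via (\ref{eq:thetainversion}) showing that the partial indices of $v^{\sharp}$ are $-k_{1},\dots,-k_{n}$, and the conclusion $k_{j}=0$ from $\mathrm{Ker}(\mathrm{Id}-C_{w})=\mathrm{Ker}(\mathrm{Id}-C_{w^{\sharp}})=0$ all match the paper. The gap is inside your vanishing lemma, in the claim that the $\Sigma\setminus S^{1}$ part of $\oint_{\Sigma}\mathrm{tr}(G_{+}+G_{-})\,dz/z$ is real. The inversion $z\mapsto1/\bar{z}$ is anti-conformal \emph{and} interchanges $\Omega_{+}$ with $\Omega_{-}$; these two reversals compose, so the inversion \emph{preserves} the orientation of $\Sigma$ rather than reversing it. Carrying out your substitution with the correct orientation, $G=G^{\sharp}$ yields $\overline{\int_{\Sigma\setminus S^{1}}\mathrm{tr}\,G_{+}\,dz/z}=-\int_{\Sigma\setminus S^{1}}\mathrm{tr}\,G_{-}\,dz/z$, so that part of the integral is purely \emph{imaginary} --- the same type as the $S^{1}$ part --- and the two can cancel each other; no conclusion follows.

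A symptom of the problem is that your vanishing argument never invokes the hypothesis $v=v^{\sharp}$ on $\Sigma\setminus S^{1}$, which is essential. Concretely, take the scalar case with $\Sigma=S^{1}\cup C_{1/2}\cup C_{2}$ (where $C_{r}=\{|z|=r\}$), $\Omega_{+}=\{|z|<1/2\}\cup\{1<|z|<2\}$, $v=1$ on $S^{1}\cup C_{1/2}$ and $v=z$ on $C_{2}$; then $\mathrm{Re}\,v>0$ on $S^{1}$, and $m=1$ for $|z|<2$, $m=1/z$ for $|z|>2$ is a nonzero homogeneous solution. Here $G=G^{\sharp}$, $G(0)=G(\infty)=0$, and every step of your computation goes through, but the $S^{1}$ part of the integral equals $-4\pi i$ while the $\Sigma\setminus S^{1}$ part equals $+4\pi i$. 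The paper kills the $\Sigma\setminus S^{1}$ contribution by a different mechanism: each component $\Sigma_{\nu}$ of $\Sigma\setminus S^{1}$ occurs twice in the sum $\sum_{\nu}\int_{\partial\Omega_{\nu}}m_{\nu1}m_{\nu2}^{\sharp}$, with opposite orientations, once with integrand $m_{1-}vm_{2-}^{\sharp}$ and once with $m_{1-}v^{\sharp}m_{2-}^{\sharp}$, and it is precisely $v=v^{\sharp}$ off $S^{1}$ that makes these cancel, leaving only $\int_{S^{1}}m_{-}vm_{-}^{\sharp}$. You would need to build that cancellation into your weighted-trace formulation; your handling of the points $0$ and $\infty$ via $G(0)=G(\infty)=0$ and the weight $dz/z$ is otherwise a clean way to organize the Cauchy-theorem step, and arguably more careful than the paper's on that point.
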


\begin{proof}
Let $w=(w^{+},w^{-})$ be an arbitrary pair of factorization data
of $v$ and set $b^{\pm}=I\pm w^{\pm}$. It is enough to prove the
bijectivity of $\mathrm{Id}-C_{w}$. Let $v=m_{-}^{-1}\theta m_{+}$
be the factorization of $v$ on $\Sigma$ as in Theorem \ref{thm:factorization}.
By inversion, we have $v^{\sharp}=m_{+}^{\sharp}\theta^{\sharp}(m_{-}^{-1})^{\sharp}$
on $\Sigma^{\sharp}=\Sigma$. Since $1/\bar{z}_{\pm}\in\Omega^{\mp}$,
$m_{+}^{\sharp}\in\mathrm{Range\,}C_{-}$, $(m_{-}^{-1})^{\sharp}\in\mathrm{Range\,}C_{+}$,
the expression (\ref{eq:thetainversion}) of $\theta^{\sharp}$ implies
that the partial indices of $v^{\sharp}$ are $-k_{n},\dots,-k_{1}$.
If $w^{\sharp}$ is a pair of factorizationdata of $v^{\sharp}$,
we have by Theorem \ref{thm: KerCoker}
\[
\mathrm{\dim\mathrm{Coker}(\mathrm{Id}}-C_{w})=\mathrm{\dim\mathrm{Ker}(\mathrm{Id}}-C_{w^{\sharp}}).
\]
Since  $v^{\sharp}$ also  satisfies the assumptions of the theorem,
it is enough to prove that $\mathrm{Ker(\mathrm{Id}}-C_{w})=0$. 

\includegraphics[scale=0.6]{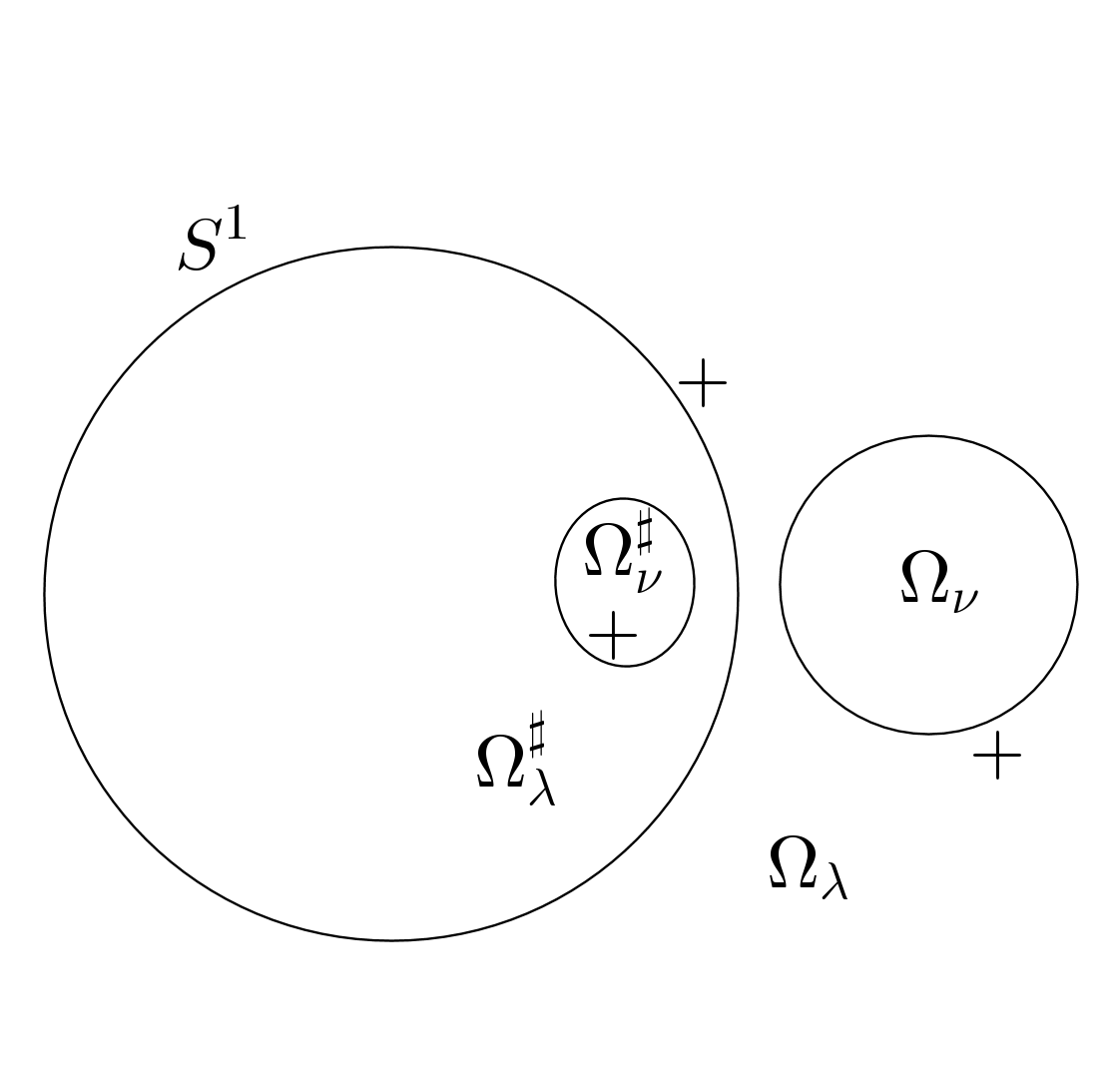}

Let $\Omega_{\nu}$ be a component of $\mathbf{C}\setminus\Sigma$
outside $S^{1}$. In the figure, the orientation of $\Sigma$ is indicated
by placing plus signs on the positive sides of the curves. We may
assume that $S^{1}$ has the clockwise orientation following the convention
of \cite{APT}. Assume $\mu\in\mathrm{Ker}(\mathrm{Id}-C_{w})$. Then
by Proposition \ref{prop:RHP and resolvent}, we have $m_{\pm}:=\mu b^{\pm}\in\mathrm{Range}\,C_{\pm}$
and they have a holomorphic extension, which we denote by $m$. Let
$m_{\nu1},m_{\nu2}$ be the boundary values of $m|_{\Omega_{\nu}},m|_{\Omega_{\nu}^{\sharp}}$
respectively. If $\Omega_{\nu}\subset\Omega_{\pm}$, then $\Omega_{\nu}^{\sharp}\subset\Omega_{\mp}$.
So if $m_{\nu1}$is the boundary value from (a part of) $\Omega_{\pm}$,
then $m_{\nu2}$ is the boundary value from (a part of) $\Omega_{\mp}$.
We have 
\[
\int_{\partial\Omega_{\nu}}m_{\nu1}m_{\nu2}^{\sharp}=0.
\]
Here notice that the usual counterclockwise orientation of $\partial\Omega_{\nu}$
may or may not coincide with the orientation of $\Sigma$ depending
on whether $\Omega_{\nu}\subset\Omega_{+}$ or $\Omega_{\nu}\subset\Omega_{-}$.
We calculate the sum with respect to all the components $\Omega_{\nu}$
outside $S^{1}$, including the one whose boundary contains $S^{1}$
(like $\Omega_{\lambda}$ in the figure). We get 
\begin{equation}
\sum_{\nu}\int_{\partial\Omega_{\nu}}m_{\nu1}m_{\nu2}^{\sharp}=0.\label{eq:sum mmintegral}
\end{equation}
It is possible that a single curve, not $S^{1}$, is included in both
$\partial\Omega_{\nu}^{+}$ with $\Omega_{\nu}^{+}\subset\Omega^{+}$
and $\partial\Omega_{\nu}^{-}$ with $\Omega_{\nu}^{-}\subset\Omega^{-}$
. In this case it has two orientations. So cancellation happens in
the sum above. Now we show 
\begin{equation}
\sum_{\nu}\int_{\partial\Omega_{\nu}}m_{\nu1}m_{\nu2}^{\sharp}=\int_{S^{1}}m_{-}vm_{-}^{\sharp}.\label{eq:sum mmintegral-1}
\end{equation}
Let $\Sigma_{\nu}$ be a component of $\Sigma$ outside $S^{1}$.
Then it is a part of the common boundary of some components $\Omega_{\nu}^{+}\subset\Omega^{+}$
and $\Omega_{\nu}^{-}\subset\Omega^{-}$. Let $m_{1}^{+},m_{1}^{-}$
be the boundary values of $m$ on $\Sigma_{\nu}$ from $\Omega_{\nu}^{+},\Omega_{\nu}^{-}$
respectively (hence $m_{1}^{+}=m_{1}^{-}v$) and let $m_{2}^{+},m_{2}^{-}$
be the boundary values of $m$ on $\Sigma_{\nu}^{\sharp}$ from $\Omega_{\nu}^{-\sharp},\Omega_{\nu}^{+\sharp}$
respectively (hence $m_{2}^{+}=m_{2}^{-}v$). In the left-hand side
of (\ref{eq:sum mmintegral-1}), the integral along $S^{1}$ appear
only once as $\int_{S^{1}}m_{+}m_{-}^{\sharp}=\int_{S^{1}}m_{-}vm_{-}^{\sharp}$.
The integrals along $\Sigma_{\nu}$ appear twice, once as $\int_{\Sigma_{\nu}}m_{1+}m_{2-}^{\sharp}=\int_{\Sigma_{\nu}}m_{1-}vm_{2-}^{\sharp}$
and once again as $\int_{-\Sigma_{\nu}}m_{1-}m_{2+}^{\sharp}=\int_{-\Sigma_{\nu}}m_{1-}v^{\sharp}m_{2-}^{\sharp}$.
Since  $v=v^{\sharp}$ on $\Sigma_{\nu}$, these integrals cancel
each other and (\ref{eq:sum mmintegral-1}) has been proved. By (\ref{eq:sum mmintegral})
and (\ref{eq:sum mmintegral-1}), we have 
\[
\int_{S^{1}}m_{-}vm_{-}^{*}=\int_{S^{1}}m_{-}vm_{-}^{\sharp}=0.
\]
Inversion (or Hermitian conjugation) gives $\int_{S^{1}}m_{-}v^{*}m_{-}^{*}=0$.
Adding these two equations, we get 
\[
\int_{S^{1}}m_{-}(v+v^{*})m_{-}^{*}=0.
\]
By the positivity of $\mathrm{Re}\,v$, we have $m_{-}=0$ on $S^{1}$,
which implies $m_{+}=m_{-}v=0$ there. We get $m=0$ at least in the
components of $\Omega_{\pm}$ whose boundaries include $S^{1}$ like
$\Omega_{\lambda}$ and $\Omega_{\lambda}^{\sharp}$ in the figure.
Then the boundary value $m_{+}$ or $m_{-}$ from such a component
vanishes along any other part of the boundary. Since  $v$ is invertible,
the boundary value from the other side  also  vanishes and we have
$m=0$ in that side. We can repeat this process as many times as necessary
(e.g. concentric circles) and finally we get $m_{\pm}=0$ and $\mu=0$
everywhere on $\Sigma$.
\end{proof}
\begin{cor}
Let $\Sigma\supset S^{1}$ be a contour invariant under inversion
in $S^{1}$. If $v\in H^{k}(\Sigma)$ with $\det v\ne0$ satisfies
\[
v=v^{\sharp}\;\text{on}\;\Sigma\quad\text{and}\quad v>0\;\text{on}\;S^{1},
\]
then $v=(m_{+})^{\sharp}m_{+}$ for some $m\in G\mathcal{H}^{k}(\mathbf{C}\setminus\Sigma)$.
\end{cor}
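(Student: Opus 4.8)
The plan is to combine the already-established zero-index result with the inversion symmetry, and then to fix an undetermined constant by positivity on $S^{1}$.

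First I would check that Theorem~\ref{thm: inversion} applies. Since $v>0$ on $S^{1}$ forces $v=v^{*}$ there, we have $\mathrm{Re}\,v=v>0$ on $S^{1}$, while $v=v^{\sharp}$ on $\Sigma\setminus S^{1}$ holds by hypothesis. Hence all partial indices of $v$ vanish, so $\theta=I$ in Theorem~\ref{thm:factorization}, and we may write $v=m_{-}^{-1}m_{+}$ for some $m\in G\mathcal{H}^{k}(\mathbf{C}\setminus\Sigma)$, i.e.\ $m_{+}=m_{-}v$ on $\Sigma$.

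Next I would exploit $v=v^{\sharp}$. Recall that $(fg)^{\sharp}=g^{\sharp}f^{\sharp}$, that $\sharp$ is an involution, and that inversion exchanges $\Omega_{+}$ and $\Omega_{-}$ (as used in the proof of Theorem~\ref{thm: inversion}); since $\Sigma^{\sharp}=\Sigma$, the matrix function $m^{\sharp}$ is again holomorphic on $\mathbf{C}\setminus\Sigma$, with boundary values $(m^{\sharp})_{\pm}=(m_{\mp})^{\sharp}$. Writing the hypothesis $v(z_{0})=v(1/\bar z_{0})^{*}$ out in terms of $m_{\pm}$ at the two points $z_{0}$ and $1/\bar z_{0}$ turns it, after rearrangement, into the crux identity
\[
m_{+}\,(m_{-})^{\sharp}=m_{-}\,(m_{+})^{\sharp}\qquad\text{on }\Sigma,
\]
which says exactly that $F:=m\,m^{\sharp}$ has equal boundary values $F_{+}=F_{-}$ across $\Sigma$.

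I would then argue that $F$ is constant. Both $m$ and $m^{\sharp}$ are holomorphic off $\Sigma$; moreover $m^{\sharp}$ extends holomorphically across $z=0$ (with value $m(\infty)^{*}$) and has the finite limit $m(0)^{*}$ at $z=\infty$, so $F$ is holomorphic on $\mathbf{C}\setminus\Sigma$ with a finite limit at infinity. Using $F_{+}=F_{-}$ together with the H\"older continuity of the boundary values (as $k\ge1$), $F$ continues holomorphically across all of $\Sigma$, hence is entire; being bounded, Liouville's theorem gives $F\equiv C$ for a constant invertible matrix $C$. Applying $\sharp$ to $C=m\,m^{\sharp}$ and using $C^{\sharp}=C^{*}$ for constants shows $C$ is Hermitian. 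Finally I would establish positivity and normalize: from $F_{-}=C$ we get $m_{-}(m_{+})^{\sharp}=C$, and substituting $m_{+}=m_{-}v$ and restricting to $S^{1}$, where $1/\bar z=z$ so that $v^{\sharp}=v$ and $(m_{-})^{\sharp}=m_{-}^{*}$, yields $C=m_{-}\,v\,m_{-}^{*}$ on $S^{1}$. Since $v>0$ and $m_{-}$ is invertible, $C>0$. Replacing $m$ by $C^{-1/2}m$ --- which stays in $G\mathcal{H}^{k}(\mathbf{C}\setminus\Sigma)$, since left multiplication by a constant invertible matrix preserves $\mathrm{Range}\,C_{\pm}$ and the nonvanishing of the determinant --- turns $C$ into $I$, and then $v=m_{-}^{-1}m_{+}=(m_{+})^{\sharp}C^{-1}m_{+}$ becomes $v=(m_{+})^{\sharp}m_{+}$, as required. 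I expect the main obstacle to be the third step: justifying rigorously that $F$ extends to a bounded entire function, i.e.\ controlling $m^{\sharp}$ at $0$ and $\infty$ and continuing $F$ across the whole contour, including $S^{1}$.
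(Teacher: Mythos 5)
Your proposal is correct and follows essentially the same route as the paper: invoke Theorem \ref{thm: inversion} to get zero partial indices, use $v=v^{\sharp}$ to produce a constant matrix $C$ linking the factorization to its $\sharp$-image, deduce $C>0$ by restricting to $S^{1}$ where $v>0$, and absorb a square root of $C$ into $m$. The only difference is that where the paper cites Clancey--Gohberg for the existence of the constant $C$, you prove it inline by the standard Liouville argument applied to $F=mm^{\sharp}$, which is a perfectly valid (and self-contained) substitute.
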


\begin{proof}
By the preceding theorem, $v$ has only zero partial indices and we
have $v=n_{-}n_{+}$ on $\Sigma$ for some $n=n(z)\in G\mathcal{H}^{k}(\mathbf{C}\setminus\Sigma)$.
Here we have replaced $n_{-}^{-1}$ by $n_{-}$. It is equivalent
to replacing $n$ by its inverse in $\Omega_{-}$. We have $v^{\sharp}=n_{+}^{\sharp}n_{-}^{\sharp}$.
Since $v=v^{\sharp}$, \cite[p.11]{clanceygohberg} implies that there
exists a constant matrix $C$ such that $n_{-}=n_{+}^{\sharp}C$ and
$n_{+}=C^{-1}n_{-}^{\sharp}$. Therefore we have $v=n_{+}^{\sharp}Cn_{+}$
on $\Sigma$. In particular, we have $v=n_{+}^{*}Cn_{+}$ on $S^{1}$.
Since $v$ is Hermitian and positive on $S^{1}$, so is $C$. There
exists a positive Hermitian matrix $R$ such that $R^{2}=C$. We have
$v=n_{+}^{\sharp}R^{2}n_{+}=(Rn)_{+}^{\sharp}(Rn)_{+}$ everywhere
on $\Sigma$. 
\end{proof}
\begin{example}
Let $\Sigma$ be the unit circle $S^{1}$, and set 
\[
v(z):=\begin{bmatrix}1-|r(z)|^{2} & -z^{2n}\bar{r}(z)\\
z^{-2n}r(z) & 1
\end{bmatrix}=\begin{bmatrix}1-|r'(z)|^{2} & -\bar{r'}(z)\\
r'(z) & 1
\end{bmatrix}\,(z\in S^{1}).
\]
Here $r(z)$ is a sufficiently smooth function on $S^{1}$, $r'(z)=z^{-2n}r(z)$
and $n$ is an integer. If $|r(z)|<1$, then $\mathrm{Re}\,v=\mathrm{diag}\,[1-|r(z)|^{2},1]>0$
and Theorem \ref{thm: inversion} applies. The matrix $v(z)$ is modeled
on the one corresponding to the defocusing integrable discrete nonlinear
Schr\"odinger equation (IDNLS). See \cite{APT,IDNLSdefocusing,IDNLSdefocusing2}.
But in the present paper, it is not necessary to assume that $r(z)$
is obtained by the scattering transform. It can be prescribed without
reference to a potential and we do not have to assume $r(-z)=-r(z)$
(\cite[(3.2.76)]{APT}), a property of the reflection coefficient.
The present author hopes this example and Theorem \ref{thm: IDNLSRHP}
below give a basis for establishing the bijectivity of the scattering/inverse
scattering transforms (cf. \cite{DZ2003,CPAM89,CPAM98}). 
\end{example}

\section{RHP modeled on the focusing IDNLS}

In this section, we consider a problem modeled on the focusing IDNLS
(\cite{APT}). Let $z_{j}\,(j=1,2,\dots,J)$ be distinct points outside
$S^{1}$. We consider the Riemann-Hilbert boundary value problem 
\begin{align}
 & M_{+}(z)=M_{-}(z)V(z)\;\text{on}\;S^{1},\label{eq: poleRHP1}\\
 & V(z)=\begin{bmatrix}1+|r(z)|^{2}\; & \;z^{2n}\bar{r}(z)\\
z^{-2n}r(z)\; & \;1
\end{bmatrix},\label{eq: poleRHP2}\\
 & \mathrm{Res}(M(z);z_{j})=\lim_{z\to z_{j}}M(z)\begin{bmatrix}0 & 0\\
z_{j}^{-2n}c_{j} & 0
\end{bmatrix},\label{eq: poleRHP3}\\
 & \mathrm{Res}(M(z);\bar{z}_{j}^{-1})=\lim_{z\to\bar{z}_{j}^{-1}}M(z)\begin{bmatrix}0 & \bar{z}_{j}^{-2n-2}\bar{c}_{j}\\
0 & 0
\end{bmatrix},\label{eq: poleRHP4}\\
 & M(z)\to I\;\mbox{\,\ as\,}\;z\to\infty.\label{eq: poleRHP5}
\end{align}
Here $n$ is an integer, $r(z)$ is a sufficiently smooth function
on the unit circle and $c_{j}$ is an arbitrary complex number. Moreover
$M_{+}$ and $M_{-}$ are the boundary values from the \textit{outside}
and \textit{inside} of the unit circle respectively. The unit circle
is oriented clockwise following the convention in \cite{APT}. In
the study of the focusing IDNLS, one encounters quartets of the form
$\left\{ \pm z_{j},\pm1/\bar{z}_{j}\right\} $, but in the present
paper we generalize the situation and consider pairs of the form $\left\{ z_{j},1/\bar{z}_{j}\right\} $.
Moreover we do not assume $r(-z)=-r(z)$. 

Following \cite{IDNLS}, we reduce this problem to one without poles. 

Let $C[z_{j}]$ be a sufficiently small circle centered at $z_{j}$
for each $j$. Assume that it is oriented clockwise. By inversion
in $S^{1}$, we get $C[z_{j}]^{\sharp}$, which is oriented \textsl{counter}clockwise.
This simple closed curve encloses $1/\bar{z}_{j}$. 

Set 
\[
m(z)=\begin{cases}
M(z)\begin{bmatrix}1 & 0\\
-\dfrac{z_{j}^{-2n}c_{j}}{z-z_{j}} & 1
\end{bmatrix} & \text{inside}\;C[z_{j}],\\[20pt]
M(z)\begin{bmatrix}1 & -\dfrac{\bar{z}_{j}^{-2n-2}\bar{c}_{j}}{z-\bar{z}_{j}^{-1}}\\
0 & 1
\end{bmatrix} & \text{inside}\;C[z_{j}]^{\sharp}.
\end{cases}
\]
and $m(z)=M(z)$ elsewhere. Then $m(z)$ is holomorphic near $z_{j},\bar{z}_{j}^{-1}$. 

Set $\Sigma=S^{1}\cup\cup_{j=1}^{J}C[z_{j}]\cup_{j=1}^{J}C[z_{j}]^{\sharp}$.
We introduce a matrix $v(z)$ on $\Sigma$ by 
\begin{align*}
v(z)= & \begin{cases}
V(z) & \text{on}\;S^{1},\\
\begin{bmatrix}1 & 0\\
\dfrac{z_{j}^{-2n}c_{j}}{z-z_{j}} & 1
\end{bmatrix} & \text{on}\;C[z_{j}],\\
\begin{bmatrix}1 & -\dfrac{\bar{z}_{j}^{-2n-2}\bar{c}_{j}}{z-\bar{z}_{j}^{-1}}\\
0 & 1
\end{bmatrix} & \text{on}\:C[z_{j}]^{\sharp}.
\end{cases}
\end{align*}
Then the RHP (\ref{eq: poleRHP1})-(\ref{eq: poleRHP5}) is equivalent
to the following RHP without poles:
\begin{equation}
m_{+}(z)=m_{-}(z)v(z)\;\text{on}\;\Sigma\quad\text{and \; }m(z)\to I\,(z\to\infty).\label{eq:IDNLSRHP}
\end{equation}

\begin{thm}
\label{thm: IDNLSRHP}The classical Riemann-Hilbert problem (\ref{eq:IDNLSRHP})
has a unique solution and so does (\ref{eq: poleRHP1})-(\ref{eq: poleRHP5}).
Moreover $v$ has only zero partial indices. 
\end{thm}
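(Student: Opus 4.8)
The plan is to reduce everything to the tools already developed and, in particular, to adapt the proof of Theorem~\ref{thm: inversion}. Since the text has shown that the pole problem (\ref{eq: poleRHP1})--(\ref{eq: poleRHP5}) and the pole-free problem (\ref{eq:IDNLSRHP}) are interchanged by an explicit, holomorphically invertible change of unknown, it suffices to treat (\ref{eq:IDNLSRHP}); and by Corollary~\ref{cor: RHPuniquesol} together with Theorem~\ref{thm: KerCoker} it is enough to prove that $\mathrm{Id}-C_w$ is bijective, equivalently that the jump $v$ on $\Sigma=S^{1}\cup\bigcup_j C[z_j]\cup\bigcup_j C[z_j]^{\sharp}$ has only zero partial indices. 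The contour $\Sigma$ is manifestly invariant under inversion, so the geometric setup of Theorem~\ref{thm: inversion} is in force.

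First I would record the positivity on $S^{1}$. There $1/\bar z=z$, so a direct computation gives $V^{*}=V$, $\det V=1$ and $\mathrm{tr}\,V=2+|r|^{2}>0$; hence $V$ is Hermitian positive definite and $\mathrm{Re}\,v=V>0$ on $S^{1}$. The same holds for $v^{\sharp}$, since $v^{\sharp}=V^{*}=V$ there. Thus, exactly as in Theorem~\ref{thm: inversion}, the identity $\dim\mathrm{Coker}(\mathrm{Id}-C_w)=\dim\mathrm{Ker}(\mathrm{Id}-C_{w^{\sharp}})$ coming from Theorem~\ref{thm: KerCoker} lets me dispose of the cokernel once I show that the kernel vanishes for every jump of this type; it then remains to prove $\mathrm{Ker}(\mathrm{Id}-C_w)=0$.

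To this end I would rerun the contour-integral argument of Theorem~\ref{thm: inversion}. For $\mu$ in the kernel and its holomorphic extension $m$ (normalized by $m\to 0$ at $\infty$), Cauchy's theorem again yields $\sum_\nu\int_{\partial\Omega_\nu}m_{\nu1}m_{\nu2}^{\sharp}=0$, with the $S^{1}$ part contributing $\int_{S^{1}}m_{-}vm_{-}^{\sharp}$. The new feature---and what I expect to be the main obstacle---is that the symmetry $v=v^{\sharp}$ \emph{fails} on the small circles: on $C[z_j]$ one computes $v^{\sharp}=\begin{bmatrix}1&0\\ z_j^{-2n-1}c_j\,z/(z-z_j)&1\end{bmatrix}$, which differs from $v$. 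Consequently the cancellation that produced (\ref{eq:sum mmintegral-1}) leaves behind the residual terms $\sum_j\int_{C[z_j]}m_{1-}(v-v^{\sharp})m_{2-}^{\sharp}$.

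The resolution, which is the crux, is that these residual terms vanish for a structural reason. The pole of $v-v^{\sharp}$ cancels, leaving the \emph{constant} matrix $v-v^{\sharp}=-z_j^{-2n-1}c_j\,E_{21}$ on $C[z_j]$, where $E_{21}=\begin{bmatrix}0&0\\ 1&0\end{bmatrix}$. On $C[z_j]$ the factor $m_{1-}$ is holomorphic inside (the transformation has removed the pole at $z_j$), while $m_{2-}^{\sharp}$ agrees with $M^{\sharp}$ near $z_j$, whose only singularity inside $C[z_j]$ is a simple pole at $z_j$ inherited from the pole of $M$ at $\bar z_j^{-1}$. Evaluating by residues, the integral equals, up to orientation, $2\pi i\,m_{1-}(z_j)\,(v-v^{\sharp})\,\mathrm{Res}_{z_j}M^{\sharp}$; here $v-v^{\sharp}$ is a multiple of $E_{21}$ while $\mathrm{Res}_{z_j}M^{\sharp}$ is of the form $E_{21}Y$ (the residue matrix in (\ref{eq: poleRHP4}) being a multiple of $E_{12}$), so the product contains the factor $E_{21}^{2}=0$ and the term vanishes. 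The same nilpotency, with the roles of $E_{21}$ and $E_{12}$ exchanged, handles $v^{\sharp}$. With the residual terms gone, one is left with $\int_{S^{1}}m_{-}vm_{-}^{\sharp}=0$, and the positivity of $V$ forces $m_{-}=0$ on $S^{1}$; propagating $m=0$ across the remaining components exactly as in Theorem~\ref{thm: inversion} gives $m\equiv 0$ and $\mu=0$. This establishes bijectivity of $\mathrm{Id}-C_w$, the vanishing of all partial indices, and hence, through the equivalence, the unique solvability of both Riemann--Hilbert problems.
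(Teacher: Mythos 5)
Your argument appears to be sound, but it follows a genuinely different route from the paper's. The paper does \emph{not} reopen the proof of Theorem~\ref{thm: inversion}: instead it introduces two auxiliary circles $C_R$, $C_{1/R}$ and conjugates the unknown by explicit piecewise matrices $A$, $B_j$, $C$ (diagonal or triangular, with entries built from $\prod_k z_k$ and $z$), producing a new jump $v'$ that satisfies the hypotheses of Theorem~\ref{thm: inversion} \emph{verbatim} --- $v'^{\sharp}=v'$ on all of $\Sigma\setminus S^1$ (including the auxiliary circles, where $v'=A$ and $v'=C^{-1}$ with $A^{\sharp}=C^{-1}$) and $\mathrm{Re}\,v'=A^{*}VA>0$ on $S^1$ --- so that the theorem can be invoked as a black box; the zero partial indices of $v$ are then read off from the factorization $v=P^{-1}\tilde m_{-}^{-1}\tilde m_{+}Q^{-1}$. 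You instead keep the original $v$, accept that $v\ne v^{\sharp}$ on the small circles, and kill the resulting defect in the analogue of (\ref{eq:sum mmintegral-1}) by a residue computation: the pole of $v-v^{\sharp}$ cancels leaving a constant multiple of $E_{21}$, while $\mathrm{Res}_{z_j}m_{2-}^{\sharp}$ inherits from (\ref{eq: poleRHP4}) a zero first row, so the product contains $E_{21}^{2}=0$. I checked this: $v-v^{\sharp}=-z_j^{-2n-1}c_jE_{21}$ on $C[z_j]$, and the continuation of $m_{2}^{-}=m_{2}^{+}v^{-1}$ across $C[z_j]^{\sharp}$ has residue of the form $Y E_{12}$ at $\bar z_j^{-1}$, so its $\sharp$ has a simple pole at $z_j$ with residue of the form $E_{21}Y^{*}$, and the term indeed vanishes. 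Your approach buys directness (no auxiliary contours, no conjugation, and the positivity of $V$ on $S^1$ is used as is), at the cost of modularity: you must re-justify the internals of Theorem~\ref{thm: inversion} in the new geometry --- in particular the meromorphic continuation of $m_{2-}^{\sharp}$ into $D[z_j]$ and the orientation bookkeeping that turns the two integrals over $C[z_j]$ into a single residue evaluation --- and the argument leans on the specific triangular/rank-one structure of the data, whereas the paper's conjugation trick would survive perturbations of that structure as long as a suitable $v'$ exists. Both proofs conclude identically via Theorem~\ref{thm: KerCoker}, Corollary~\ref{cor: RHPuniquesol}, and the stated equivalence of (\ref{eq:IDNLSRHP}) with the pole problem.
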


\begin{proof}
The jump matrix $v(z)$ does not satisfy the assumption of Theorem
\ref{thm: inversion}, but can be converted to such a one by conjugation,
i.e. by introducing a new unknown matrix $m'(z)$. 

Let $C_{R}$ and $C_{1/R}$ be the circles $|z|=R$ and $|z|=1/R$
respectively, where $R>0$ is sufficiently large. We give them both
counterclockwise orientation. 

\includegraphics[scale=0.35]{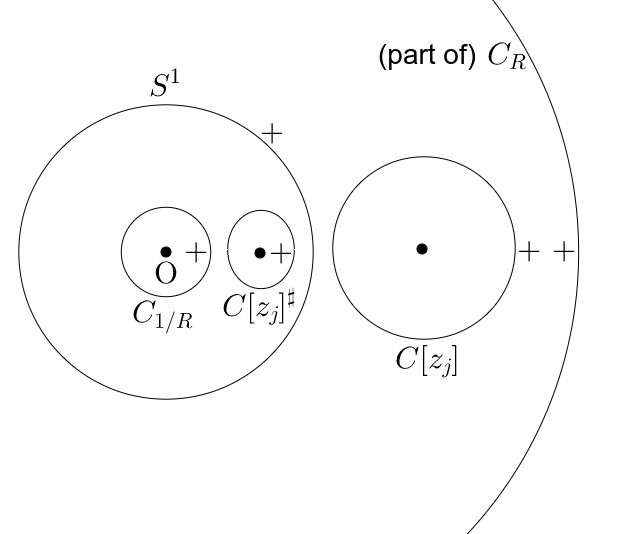}

We introduce 
\begin{align*}
A=A(z)= & \begin{bmatrix}\prod_{k=1}^{J}z_{k} & 0\\
0 & z
\end{bmatrix},\\
B_{j}=B_{j}(z)= & \begin{bmatrix}\prod_{k=1}^{J}z_{k} & 0\\
-\left(\prod_{k\ne j}z_{k}\right)z_{j}^{-2n}c_{j} & z
\end{bmatrix}\,(j=1,\dots J),\\
C=C(z)= & \begin{bmatrix}\prod_{k=1}^{J}\bar{z}_{k}^{-1} & 0\\
0 & z
\end{bmatrix}.
\end{align*}
We define $m'=m'(z)$ by the following set of rules: (1) $m'=m$ inside
$C_{1/R}$ and outside $C_{R}$. (2) $m'=mA$ if $z$ is between $S^{1}$
and $C_{R}$ and is outside $C[z_{j}]$ for all $j$. (3) $m'=mB_{j}$
inside $C[z_{j}]$. (4) $m'=mC$ between $C_{1/R}$ and $S^{1}$ except
on  $\cup_{j}C[z_{j}]^{\sharp}$. Then the normalization condition
at $\infty$ remains the same. Now we calculate the jump matrix $v'=v'(z)$
for $m'$: $m'_{+}=m'_{-}v'$ on $\Sigma$. 

We have $v'=A$ on $C_{R}$ and $v'=C^{-1}$ on $C_{1/R}$. Since
 $A^{\sharp}=C^{-1},$ we have $v'^{\sharp}(z)=v(z)$ for $z\in C_{R}\cup C_{1/R}$. 

On $C[z_{j}]$, we have $v'=B_{j}^{-1}vA$. We evaluate $v'^{-1}=A^{-1}v^{-1}B_{j}$
first, because $A^{-1}$ is easier than $B_{j}^{-1}$. Then we get
\[
v'=(v'^{-1})^{-1}=\begin{bmatrix}1 & 0\\
\dfrac{\left(\prod_{k\ne j}z_{k}\right)z_{j}^{-2n}c_{j}}{z-z_{j}} & 1
\end{bmatrix}
\]
on $C[z_{j}]$. Next on $C[z_{j}]^{\sharp}$, we have 
\[
v'=C^{-1}vC=\begin{bmatrix}1 & -\dfrac{z\left(\prod_{k\ne j}\bar{z}_{k}\right)\bar{z}_{j}^{-2n-1}\bar{c}_{j}}{z-\bar{z}_{j}^{-1}}\\
0 & 1
\end{bmatrix}.
\]
Therefore $v'^{\sharp}(z)=v'(z)$ holds for $z\in C[z_{j}]\cup C[z_{j}]^{\sharp}$. 

On $S^{1},$we have $v'=C^{-1}vA=A^{\sharp}vA=A^{*}vA$. Since $v$
is a positive Hermitian matrix, so are $v'$ and $\mathrm{Re}\,v'$. 

By Theorem \ref{thm: inversion}, the matrix $v'$ has only zero partial
indices. By Proposition \ref{prop: RHPclassical} and Corollary \ref{cor: RHPuniquesol},
the classical RHP $m'_{+}=m'_{-}v',m'\to I(z\to\infty)$ has a unique
solution and so does (\ref{eq:IDNLSRHP}). 

We have $v'=PvQ$, where $\left\{ P,Q\right\} \subset\left\{ A,B_{j}^{-1},C,C^{-1}\right\} $.
Let $v'=\tilde{m}_{-}^{-1}\tilde{m}_{+}$ be its factorization. We
have $v=P^{-1}\tilde{m}_{-}^{-1}\tilde{m}_{+}Q^{-1}$. This factorization
of $v$ means all the partial indices are zero. 
\end{proof}
\begin{rem}
In Theorem \ref{thm: IDNLSRHP} above, $\left\{ (z_{j},1/\bar{z}_{j}),c_{j};\,j=1,\dots,J\right\} $
and $r(z)$ are not true scattering data. True ones have two additional
characteristics: poles appear in quartets of the form $(\pm z_{j},\pm1/\bar{z}_{j})$
and the reflection coefficent satisfy $r(-z)=-r(z)$. According to
Theorem \ref{thm: IDNLSRHP}, we can solve the associated RHP uniquely
even for this kind of formal or generalized `scattering data' and
apply the potential reconstruction formula, but the `potential' obtained
this way is not necessarily a potential. 
\end{rem}

\end{document}